\let\q=\quad
\def\@email#1#2{%
 \endgroup
 \patchcmd{\titleblock@produce}
  {\frontmatter@RRAPformat}
  {\frontmatter@RRAPformat{\produce@RRAP{*#1\href{mailto:#2}{#2}}}\frontmatter@RRAPformat}
  {}{}
}%
\begin{document}


\title[On the multiplicity of density operator representations]{On the Multiplicity of Density Operator Representations}
\author{Gianfranco Cariolaro}
\affiliation{ 
Department of Information Engineering, University of Padova,\\
Via Gradenigo 6/B - 35131 Padova, Italy\\ 
}%
\author{Edi Ruffa}%
 \email{edi.ruffa@ieee.org.}
\affiliation{ 
Vimar SpA,\\
Via IV Novembre, 32 - 36063 Vicenza, Italy \\ 
}%


\date{June 21, 2024}


\begin{abstract}
The density operator is usually defined starting from a set of kets in the Hilbert space and a probability distribution.
From this definition it is easy to obtain a factorization of a given density operator, here called density factor (DF). The multiplicity and the variety
of DFs is investigated using the tools of Matrix Analysis, arriving in particular to establish the DF with minimal size.
The approach based on Matrix Analysis does not seem to be available elsewhere. 
\end{abstract}

\maketitle

\section{\label{sec01}Introduction\protect}

The {\bf density operator}, also called {\bf density matrix}, 
is one of the most important tools in Quantum Mechanics. This operator
was introduced independently by Landau [\onlinecite{Lan27}] and by von Neumann [\onlinecite{von27}]. But  Nobel Prize winner Roy Jay Glauber  contributed most to establishing the importance of the density operator
in numerous articles and books,  since the time of the Manhattan project.
For the rest of the story on density operator see  [\onlinecite{Gla63}], [\onlinecite{Gla68}]. 
 
The density operator is introduced mainly to take into account the thermal noise, but it has a more general role. Also pure states can be represented as a degenerate form of density operators. This  operator is usually defined starting
 from a set of kets in the Hilbert space and a probability distribution.
 From this formulation it is easy to get a factorization of the form
 $\rho= \Psi\;\Psi^*$, where  $\Psi^{*}$ is the conjugate transpose of $\Psi$.
 We find it convenient to call $\Psi$ a {\bf density factor} (DF) of 
 the density operator $\rho$.
 Note that, while a DF  $\Psi$ uniquely determines  $\rho$, for a given $\rho$, one may find infinitely many DFs, also with different sizes.
 This multiplicity has been investigated  by Hugston, Josa and Wootters [\onlinecite{Hug93}], but in this paper it is reconsidered in a new form, completely based on matrix analysis by applying  the singular valued decomposition (SVD) and the eigendecomposition (EID). 
 
 The paper is organized as follows. In Section \ref{sec02} the standard definition of density operator and the definition of density factor are introduced.
 In Sections \ref{sec03} and \ref{sec04} the variety of density factors are considered, establishing in particular the minimum form (orthonormal DF). Throughout the theory is illustrated by specific examples,

\section{\label{sec02}Definitions\protect}

\subsection{\label{sec:level2a}Definition of density operator}

In an $n$-dimensional Hilbert space $\mathcal{H}$ a (discrete) density operator is introduced in the form
\begin{equation}
\rho=\sum_{i=1}^{k} p_{i}\left|\widehat{{\bm{\psi}}}_{i}\right\rangle\left\langle\widehat{{\bm{\psi}}}_{i}\right|
\label{eq01}
\end{equation}
where $\left|\widehat{{\bm{\psi}}}_{i}\right\rangle$ are {\it normalized} quantum states of $\mathcal{H}$ and $\left\{p_{i}\right\}_{i=1,\ldots, k}$ is a  probability
distribution, that is, with $p_{i} \geq 0$ and $\sum_{i=1}^{k} p_{i}=1$. The properties of $p_{i}$ assure that $\rho$ is a positive semidefinite (PSD) operator with $\operatorname{Tr}(\rho)=1$.

In (\ref{eq01}) the density operator $\rho$ is defined starting from the DF of $k$ {\it unnormalized} states
\begin{subequations}
\begin{equation}
\bm{\Psi}=\Biggl[\sqrt{p_{1}}\left| \widehat{\bm{\psi}}_{1} \right\rangle, \ldots, \sqrt{p_{k}} \left| \widehat{\bm{\psi}}_{k} \right\rangle\Biggr] = \Biggl[ \left| \bm{\psi}_{1} \right\rangle, \ldots, \left| \bm{\psi}_{k}\right\rangle \Biggr]
\label{eq02a}
\end{equation}
where
\begin{equation}
\left|\bm{\psi}_{i}\right\rangle = \sqrt{p_{i}} \left| \widehat{{\bm{\psi}}}_{i}\right\rangle, \quad i=1, \ldots, k . 
\label{eq02b}
\end{equation}
\end{subequations}
In fact, the density operator (\ref{eq01}) can be rewritten in matrix form as
\begin{equation}
\rho = \bm{\Psi} \bm{\Psi}^{*} = \left[\left| \bm{\psi}_{1}\right\rangle, \ldots,\left| \bm{\psi}_{k}\right\rangle\right] \left[\begin{array}{c}
\left\langle\widehat{\bm{\psi}}_{1}\right|  \\
\vdots \\
\left\langle\widehat{\bm{\psi}}_{k}\right|
\end{array}\right]
\label{eq03}
\end{equation}
where $\bm{\Psi}^{*}$ is the conjugate transpose of $\bm{\Psi}$.

\subsection{\label{sec:level2b}Definition of Density Factor (DF)}

The matrix $\bm{\Psi}$ may be regarded as a representation  of the given density operator $\rho$, since it contains all the information of $\rho$. It seems to be natural from the factorization $\rho=\bm{\Psi}\,\bm{\Psi}^{*}$ to call it as a {\bf a density factor} associated to the given $\rho$.

Note that, while a DF  $\bm{\Psi}$ uniquely determines a density operator $\rho$, for a given $\rho$, one may find infinitely many DFs, also with different lengths $k$.

The purpose of this paper is to investigate the  multiplicity of the possible density operators. This topic has been investigated in a letter [\onlinecite{Hug93}] by Hugston, Josa and Wootters, but here it is reconsidered in a new form, completely based on matrix analysis. For instance, in [\onlinecite{Hug93}] the compact form (\ref{eq03}) and the consequent application of the singular valued decomposition (SVD) are not considered. In particular, the SVD states a clear link between the decomposition (eigendecomposition) of a density operator and  its DF. Another interpretation of DF, not considered in [\onlinecite{Hug93}], is as a factor of the density operator, as stated by (\ref{eq03}). It is important to remark that the DFs  of density operators play a fundamental role in quantum detection based on the square root measurement [\onlinecite{Eldar4}] and also in other applications.

\newtheorem{exmp}{Example}



\begin{exmp}

Consider the density operator obtained from the $k=3$ normalized states of $\mathcal{H}=\mathbb{C}^{4}$
$$
\left|\widehat{\bm{\psi}}_{1}\right\rangle = \frac{1}{2}\left[\begin{array}{c}
1  \\
-i \\
-1 \\
i
\end{array}\right],
\quad\left|\widehat{\bm{\psi}}_{2}\right\rangle = \frac{1}{2 \sqrt{6}}
\left[\begin{array}{c}
 2   + \sqrt{2} \\
-2 i + \sqrt{2} \\
-2   + \sqrt{2} \\
 2 i + \sqrt{2}
\end{array}\right],
\quad\left|\widehat{\bm{\psi}}_{3}\right\rangle = \frac{1}{2 \sqrt{6}}
\left[\begin{array}{c}
 2        - \sqrt{2} \\
-2 i      - \sqrt{2} \\
-2        - \sqrt{2} \\
 2 i      - \sqrt{2}
\end{array}\right]
$$

with probabilities $\left\{p_{1},p_2, p_3\right\}=
\left\{\frac14,\frac38, \frac38\right\}$. The expression of $\rho$ is therefore
$$
\rho=
\frac{1}{4} \left|\widehat{\bm{\psi}}_{1}\right\rangle \left\langle \widehat{\bm{\psi}}_{1}\right| + 
\frac{3}{8} \left|\widehat{\bm{\psi}}_{2}\right\rangle \left\langle \widehat{\bm{\psi}}_{2}\right| +
\frac{3}{8} \left| \widehat{\bm{\psi}}_{3}\right\rangle \left\langle \widehat{\bm{\psi}}_{3}\right| = 
\frac{1}{16}
\left[\begin{array}{cccc}
  4       & 1 + i 3 &   -2   &  1 -i 3 \\
  1 - i 3 &    4    & 1 + i3 &    -2   \\
  -2      & 1 - i 3 &   4    & 1  +i 3 \\
  1 + i 3 &    -2   & 1 - i3 &     4 
\end{array}\right]\,.
$$

The unnormalized states $\bm{\psi}_{i}=\sqrt{p_{i}}\; \widehat{\bm{\psi}}_{i}$ are
$$
\left|\bm{\psi}_{1}\right\rangle = \frac{1}{4} 
\left[\begin{array}{c}
 1 \\
-i \\
-1 \\
 i
\end{array}\right], \quad\left| \bm{\psi}_{2}\right\rangle = \frac{1}{8} \left[\begin{array}{c}
 2   + \sqrt{2} \\
-2 i + \sqrt{2} \\
-2   + \sqrt{2} \\
 2 i + \sqrt{2}
\end{array}\right], \quad\left| \bm{\psi}_{3}\right\rangle = \frac{1}{8} \left[\begin{array}{c}
 2   - \sqrt{2} \\
-2 i - \sqrt{2} \\
-2   - \sqrt{2} \\
 2 i - \sqrt{2}
\end{array}\right]
$$

and the corresponding factor is
\begin{equation}
\bm{\Psi} = \left[\left|\bm{\psi}_{1}\right\rangle, \left|\bm{\psi}_{2}\right\rangle, \left|\bm{\psi}_{3}\right\rangle\right] = \frac{1}{4} 
\left[\begin{array}{ccc}
 1 &  1+\frac{1}{\sqrt{2}} &  1-\frac{1}{\sqrt{2}} \\
-i & -i+\frac{1}{\sqrt{2}} & -i-\frac{1}{\sqrt{2}} \\
-1 & -1+\frac{1}{\sqrt{2}} & -1-\frac{1}{\sqrt{2}} \\
 i &  i+\frac{1}{\sqrt{2}} &  i-\frac{1}{\sqrt{2}}
\end{array}\right]\,.
\label{eq04}
\end{equation}
We can check that
$$
\bm{\Psi} \bm{\Psi}^{*} = \left[\left| \bm{\psi}_{1}\right\rangle, \left| \bm{\psi}_{2}\right\rangle, \left| \bm{\psi}_{3}\right\rangle \right]\left[ \begin{array}{l}
\left\langle \bm{\psi}_{1} \right| \\
\left\langle \bm{\psi}_{2} \right| \\
\left\langle \bm{\psi}_{3} \right|
\end{array}\right]=\rho\,.
$$
Note also that
$$
\bm{\Psi}^{*} \bm{\Psi} = \left[\begin{array}{ccc}
 \frac{1}{4} & -\frac{1}{4} & -\frac{1}{4} \\
-\frac{1}{4} &  \frac{3}{8} &  \frac{1}{8} \\
-\frac{1}{4} &  \frac{1}{8} &  \frac{3}{8}
\end{array}\right]\neq \bm{\Psi} \bm{\Psi}^{*}
$$
which shows that the states $\left| \bm{\psi}_{i} \right\rangle$ are not orthonormal.

\end{exmp}

\subsection{\label{sec:level2c}The class of density factors of a given density operator}

First we introduce a few definitions. A DF $\bm{\Psi}$ of $\rho$ with $k$ components will be called a {\bf $k$--density factor} ($k$-DF)  of $\rho$. It is easy to see that the minimum value of $k$ is given by the rank $r$ of $\rho$, which is also the rank of any DF of $\rho$, but the value of $k$ may be arbitrarily large. An $r$-DF , with $r=\operatorname{rank}(\rho)$, will be called {\bf  minimum density factor} of $\rho$. A DF $\left.\bm{\Psi}=\left[ \left| \bm{\psi}_{1} \right\rangle|, \ldots, | \bm{\psi}_{k}\right\rangle \right]$, where the states are orthonormal, will be called an {\bf orthonormal density factor} of $\rho$. For a $k$--DF $\bm{\Psi}$ the orthonormality condition can be written in the form
\begin{equation}
\bm{\Psi}^{*} \bm{\Psi} = \operatorname{diag} \left\{\left\langle \bm{\psi}_{1} \mid \bm{\psi}_{1} \right\rangle, \ldots, \left\langle \bm{\psi}_{k} \mid \bm{\psi}_{k} \right\rangle\right\}
\label{eq05}
\end{equation}
where $\left\langle \bm{\psi}_{i} \mid \bm{\psi}_{i} \right\rangle$ gives the probabilities.

Note that an orthonormal $k$-DF  $\bm{\Psi}$ is necessarily minimum. In fact, the $k$ orthonormal columns of $\bm{\Psi}$ are linearly independent and therefore $\bm{\Psi}$ has $\operatorname{rank} k$, but $\operatorname{rank}(\bm{\Psi})=\operatorname{rank}(\rho)=r$.

In the previous example, where $k=3$, the rank of $\rho$ is $r=2$ and $\bm{\Psi}^{*} \bm{\Psi}$ is not a diagonal matrix, so $\bm{\Psi}$ is neither minimum nor orthonormal.

\subsection{\label{sec:level2d}Minimum density factor from the EID of $\rho$}

We now show that it is easy to find a minimum DF  of $\rho$.

\newtheorem{prop}{Proposition}

\begin{prop}

\label{prop01}

Let $\rho$ be a density operator in an $n$-dimensional Hilbert space $\mathcal{H}$ and let $r=\operatorname{rank}(\rho)$. Next, consider the eigendecomposition (EID) of $\rho$

\begin{equation}
\rho=\sum_{i=1}^{r} \sigma_{i}^{2}\left| \widehat{\bm{u}}_{i} \right\rangle\left
\langle \widehat{\bm{u}}_{i} \right|= \widehat{\bm{U}}  \bm{\Sigma}^{2}  \widehat{\bm{U}}^{*}
\label{eq06}
\end{equation}
where $\sigma_{i}^{2}$ are the $r$ positive eigenvalues of $\rho,\left|\widehat{\bm{u}}_{i} \right\rangle$ are the corresponding orthonormal eigenvectors, $ \widehat{\bm{U}} = \left[\left| \widehat{\bm{u}}_{1} \right\rangle, \ldots, \left| \widehat{\bm{u}}_{r} \right\rangle \right]$ and $\bm{\Sigma}^{2}=\operatorname{diag}\left\{\sigma_{1}^{2}, \ldots, \sigma_{r}^{2}\right\}$. Then,
\begin{equation}
\bm{U} = \widehat{\bm{U}}  \bm{\Sigma} = \left[\bm{u}_{1}, \ldots, \bm{u}_{r} \right], \q \text{with $\bm{u}_i = \sigma_i \, \widehat{\bm{u}}_i$} 
\label{eq07}
\end{equation}
is a minimum orthonormal DF of $\rho$.
\end{prop}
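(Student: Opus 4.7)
The plan is to verify the three defining properties in turn: that $\bm{U}$ is a DF of $\rho$, that it has exactly $r$ columns (so it is minimum), and that it satisfies the orthonormality condition (\ref{eq05}).

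First, I would establish that $\bm{U}\,\bm{U}^{*}=\rho$. Since $\bm{\Sigma}$ is a real diagonal matrix, we have $\bm{\Sigma}^{*}=\bm{\Sigma}$, so
\[
\bm{U}\,\bm{U}^{*} = \widehat{\bm{U}}\,\bm{\Sigma}\,(\widehat{\bm{U}}\,\bm{\Sigma})^{*} = \widehat{\bm{U}}\,\bm{\Sigma}\,\bm{\Sigma}\,\widehat{\bm{U}}^{*} = \widehat{\bm{U}}\,\bm{\Sigma}^{2}\,\widehat{\bm{U}}^{*},
\]
which is precisely the EID (\ref{eq06}) of $\rho$. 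Hence $\bm{U}$ is a DF of $\rho$ in the sense of the definition in Section \ref{sec:level2b}.

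Second, I would argue that $\bm{U}$ is minimum. By construction $\bm{U}=[\bm{u}_{1},\dots,\bm{u}_{r}]$ has $k=r$ columns. As recalled in Section \ref{sec:level2c}, the rank of any DF of $\rho$ equals $r=\operatorname{rank}(\rho)$, which is also the minimum admissible number of columns; therefore $\bm{U}$ attains this minimum.

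Third, I would verify the orthonormality condition (\ref{eq05}). Because the eigenvectors $|\widehat{\bm{u}}_{i}\rangle$ are orthonormal, $\widehat{\bm{U}}^{*}\widehat{\bm{U}}=\bm{I}_{r}$, and thus
\[
\bm{U}^{*}\bm{U} = \bm{\Sigma}\,\widehat{\bm{U}}^{*}\,\widehat{\bm{U}}\,\bm{\Sigma} = \bm{\Sigma}^{2} = \operatorname{diag}\{\sigma_{1}^{2},\dots,\sigma_{r}^{2}\},
\]
which is diagonal, and whose entries $\sigma_{i}^{2}=\langle\bm{u}_{i}\mid\bm{u}_{i}\rangle$ play the role of the probabilities. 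This proves $\bm{U}$ is orthonormal, and combined with the previous steps it is a minimum orthonormal DF.

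There is no serious obstacle in this proof: the whole argument is a direct consequence of the spectral theorem (which is already assumed in writing down (\ref{eq06})) together with the bookkeeping identities $\bm{\Sigma}^{*}=\bm{\Sigma}$ and $\widehat{\bm{U}}^{*}\widehat{\bm{U}}=\bm{I}_{r}$. The only point worth stating carefully is that $\widehat{\bm{U}}$ is generally rectangular ($n\times r$), so $\widehat{\bm{U}}\,\widehat{\bm{U}}^{*}\neq\bm{I}_{n}$ in general, while $\widehat{\bm{U}}^{*}\widehat{\bm{U}}=\bm{I}_{r}$; keeping the two products distinct is what ensures the orthonormality computation above goes through cleanly.
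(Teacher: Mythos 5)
Your proof is correct and follows essentially the same route as the paper: direct verification that $\bm{U}\bm{U}^{*}=\rho$ from the EID, that the column count equals $r=\operatorname{rank}(\rho)$, and that orthonormality of the eigenvectors gives the diagonal Gram matrix $\bm{U}^{*}\bm{U}=\bm{\Sigma}^{2}$ with the eigenvalues as probabilities. Your version is merely more explicit than the paper's (which asserts $\bm{U}\bm{U}^{*}=\rho$ as ``clear'' and does not write out the computation $\bm{U}^{*}\bm{U}=\bm{\Sigma}^{2}$), and the remark about keeping $\widehat{\bm{U}}^{*}\widehat{\bm{U}}=\bm{I}_{r}$ distinct from $\widehat{\bm{U}}\widehat{\bm{U}}^{*}$ is a sensible precaution.
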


\begin{proof}
Clearly $\bm{U} \bm{U}^{*}=\rho$, that is, $\bm{U}$ is a factor of $\rho$. Since $\rho$ is Hermitian and PSD, its non zero eigenvalues are positive, as remarked by their symbols $\left(\sigma_{i}^{2}\right)$. Moreover, $\operatorname{Tr}(\rho) = 1$ and the trace is given by the sum of the eigenvalues. Hence $\sigma_{i}^{2}$ are normalized probabilities. Finally note that in (\ref{eq06}) $ \widehat{\bm{U}} $ collects $r$ orthonormal eigenvectors, so $\bm{U}$ consists of orthonormal kets and represents a minimum orthonormal DF  of $\rho$.
\end{proof}


\begin{exmp}

Reconsider the density operator of the previous example, which has rank $r=2$. The EID is given $\rho= \widehat{\bm{U}}  \bm{\Sigma}^{2}  \widehat{\bm{U}}^{*}$ with

$$
\widehat{\bm{U}} =\frac{1}{2}
\left[\begin{array}{cc}
- 1 & - 1 \\
  i & - 1 \\
  1 & - 1 \\
- 1 & - 1
\end{array}\right], \quad \bm{\Sigma}^{2} = \frac{1}{4} 
\left[\begin{array}{cc}
  3 & 0 \\
  0 & 1
\end{array}\right] \quad \rightarrow \quad \bm{\Sigma} = \frac{1}{2} 
\left[\begin{array}{cc}
\sqrt{3} & 0 \\
  0      & 1
\end{array}\right]\,.
$$
Hence,
\begin{equation}
\bm{\Psi}_{0}= \widehat{\bm{U}}  \bm{\Sigma} = \frac{1}{4} 
\left[\begin{array}{cc}
- \sqrt{3} & -1 \\
i \sqrt{3} & -1 \\
  \sqrt{3} & -1 \\
- \sqrt{3} & -1 
\end{array}\right]
\label{eq08}
\end{equation}
is a minimum orthonormal DF  of $\rho$. We can check that
$$
\widehat{\bm{U}}^{*}  \widehat{\bm{U}} = 
\left[\begin{array}{ll}
 1 & 0 \\
 0 & 1
\end{array}\right], \quad \bm{U} \bm{U}^{*}=\rho\,.
$$

\end{exmp}

\subsection{Minimum DF from an arbitrary DF}

Given an arbitrary DF  $\bm{\Psi}$ of $\rho$ it is possible to find directly a minimum orthonormal DF.
 

\begin{prop}

\label{prop02}

Let $\bm{\Psi}$ be an arbitrary $k$-DF  of $\rho$. Consider the singular-value decomposition (SVD) [\onlinecite{Horn98}] of the $n \times k$ matrix $\bm{\Psi}$, given by
\begin{equation}
\bm{\Psi} = \sum_{i=1}^{r} \sigma_{i} \left| \widehat{\bm{u}}_{i} \right\rangle \left\langle \widehat{\bm{v}}_{i}\right| = \widehat{\bm{U}}  \bm{\Sigma} \widehat{\bm{V}}^{*} 
\label{eq09}
\end{equation}
where $\sigma_{i}$ are the square roots of the $r$ positive eigenvalues $\sigma_{i}^{2}$ of $\bm{\Psi} \bm{\Psi}^{*}=\rho, \bm{\Sigma}=\operatorname{diag}\left\{\sigma_{1}, \ldots, \sigma_{r}\right\}$, $\left|\widehat{\bm{u}}_{i}\right\rangle$ and $ \widehat{\bm{U}} $ are the same as in the EID of (\ref{eq06}), $\left| \widehat{\bm{v}}_{i}\right\rangle$ are orthonormal vectors of length $k$ and $\widehat{V} = \left[\left| \widehat{\bm{v}}_{1}\right\rangle, \ldots, \left| \widehat{\bm{v}}_{r}\right\rangle\right]$. Then, a minimum orthonormal DF  of $\rho$ is given by
\begin{equation}
\bm{\Psi}_{0} = \bm{U} = \widehat{\bm{U}}  \bm{\Sigma}\,.
\label{eq10}
\end{equation}

\end{prop}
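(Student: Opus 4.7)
The plan is to verify that the matrix $\bm{\Psi}_{0} = \widehat{\bm{U}}\bm{\Sigma}$ defined in \eqref{eq10} simultaneously satisfies the three requirements: (i) it factorizes $\rho$, (ii) its columns are orthogonal, and (iii) it has exactly $r=\operatorname{rank}(\rho)$ columns. Since the SVD machinery delivers these essentially for free, the proof will largely be a matter of bookkeeping; the one substantive step is identifying the singular values of $\bm{\Psi}$ with the square roots of the eigenvalues of $\rho$.

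First I would observe that the SVD \eqref{eq09}, together with the relation $\widehat{\bm{V}}^{*}\widehat{\bm{V}}=\bm{I}_{r}$, yields
\begin{equation*}
\bm{\Psi}\bm{\Psi}^{*} = \widehat{\bm{U}}\bm{\Sigma}\widehat{\bm{V}}^{*}\widehat{\bm{V}}\bm{\Sigma}\widehat{\bm{U}}^{*} = \widehat{\bm{U}}\bm{\Sigma}^{2}\widehat{\bm{U}}^{*}.
\end{equation*}
But by hypothesis $\bm{\Psi}\bm{\Psi}^{*}=\rho$, so this expression is precisely the EID of $\rho$ described in Proposition \ref{prop01}. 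Hence the $\widehat{\bm{U}}$ and $\bm{\Sigma}^{2}$ extracted from the SVD of $\bm{\Psi}$ coincide (up to the usual non-uniqueness in degenerate eigenspaces) with the ones in \eqref{eq06}, justifying the notation used in the statement and confirming that the $\sigma_{i}^{2}$ are the positive eigenvalues of $\rho$.

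Next I would verify the three defining properties of $\bm{\Psi}_{0}$. That $\bm{\Psi}_{0}$ is a DF of $\rho$ follows from
\begin{equation*}
\bm{\Psi}_{0}\bm{\Psi}_{0}^{*} = \widehat{\bm{U}}\bm{\Sigma}\bm{\Sigma}\widehat{\bm{U}}^{*} = \widehat{\bm{U}}\bm{\Sigma}^{2}\widehat{\bm{U}}^{*} = \rho.
\end{equation*}
Orthonormality of its columns follows from the orthonormality of the left singular vectors: using $\widehat{\bm{U}}^{*}\widehat{\bm{U}}=\bm{I}_{r}$,
\begin{equation*}
\bm{\Psi}_{0}^{*}\bm{\Psi}_{0} = \bm{\Sigma}\widehat{\bm{U}}^{*}\widehat{\bm{U}}\bm{\Sigma} = \bm{\Sigma}^{2} = \operatorname{diag}\{\sigma_{1}^{2},\ldots,\sigma_{r}^{2}\},
\end{equation*}
which is exactly the condition \eqref{eq05}, with the diagonal entries recovering the probabilities. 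Finally, $\bm{\Psi}_{0}$ is an $n\times r$ matrix, so it has the minimum possible number of columns, namely $r=\operatorname{rank}(\rho)$.

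The only conceptual subtlety — and the step I would highlight carefully — is the identification in the first paragraph: one must make explicit that the SVD of the $n\times k$ matrix $\bm{\Psi}$ automatically produces the eigendecomposition of the $n\times n$ Hermitian PSD matrix $\rho=\bm{\Psi}\bm{\Psi}^{*}$. Once this is pinned down, the proposition is essentially a restatement of Proposition \ref{prop01}, with the SVD playing the role of a bridge from an arbitrary DF to its minimum orthonormal representative.
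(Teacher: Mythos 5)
Your proposal is correct and follows essentially the same route as the paper: both proofs hinge on the computation $\bm{\Psi}\bm{\Psi}^{*}=\widehat{\bm{U}}\bm{\Sigma}\widehat{\bm{V}}^{*}\widehat{\bm{V}}\bm{\Sigma}\widehat{\bm{U}}^{*}=\widehat{\bm{U}}\bm{\Sigma}^{2}\widehat{\bm{U}}^{*}=\rho$, identifying the SVD of $\bm{\Psi}$ with the EID of $\rho$ and then reducing to Proposition \ref{prop01}. You merely spell out more explicitly (via $\bm{\Psi}_{0}^{*}\bm{\Psi}_{0}=\bm{\Sigma}^{2}$) the orthonormality and minimality that the paper imports by citing Proposition \ref{prop01}, which is a welcome clarification but not a different argument.
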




\begin{proof}
The expression of $\bm{\Psi}_{0}$ is the same as in Proposition \ref{prop01}. Note that from (\ref{eq09}), considering that $\bm{V}^{*} \bm{V} = I_{r}$ (the $r \times r$ identity matrix), we find
$$
\bm{\Psi} \bm{\Psi}^{*}= \widehat{\bm{U}}  \bm{\Sigma} \widehat{\bm{V}}^{*} \widehat{\bm{V}} \bm{\Sigma}  \widehat{\bm{U}}^{*} = \widehat{\bm{U}}  \bm{\Sigma} \bm{\Sigma}  \widehat{\bm{U}}^{*} = \bm{\Psi}_{0} \bm{\Psi}_{0}^{*} = \rho
$$
that is the EID of $\rho$.
\end{proof}



\begin{exmp}

The DF  given by (\ref{eq04}) has rank $r=2$ and its SVD is $\bm{\Psi} = \widehat{\bm{U}}  \bm{\Sigma} \widehat{\bm{V}}^{*}$ with

$$
\widehat{\bm{\bm{U}}} = \frac{1}{2}
\left[\begin{array}{cc}
 - 1 & - 1 \\
   i & - 1 \\
   1 & - 1 \\
 - 1 & - 1 
\end{array}\right], \quad \widehat{\bm{V}} = \frac{1}{\sqrt{6}}
\left[\begin{array}{cc}
 \sqrt{2} & 0        \\
 \sqrt{2} & \sqrt{3} \\
-\sqrt{2} & \sqrt{3}
\end{array}\right], \quad \bm{\Sigma} = \frac{1}{2} \left[
\begin{array}{cc}
 \sqrt{3} & 0 \\
        0 & 1
\end{array}\right]\,.
$$

Hence, we find the minimum factor $\bm{\Psi}_{0}$ obtained with the EID of $\rho$.

\end{exmp}

\section{\label{sec03}Other ways to modify a density factor}

Let $\bm{\Psi}$ be a $k$-DF  of $\rho$, that is, $\bm{\Psi} \bm{\Psi}^{*}=\rho$, and let $\bm{A}$ be an arbitrary $k \times  p$ complex matrix that verifies the condition
\begin{equation}
\bm{A} \bm{A}^{*} = \bm{I}_{k}
\label{eq11}
\end{equation}
where $\bm{I}_{k}$ is the identity matrix of order $k$. Then,
\begin{equation}
\bm{\Phi} = \bm{\Psi} \bm{A}
\label{eq12}
\end{equation}
is a $p$-DF  of $\rho$. In fact,
$$
\bm{\Phi} \bm{\Phi}^{*}=\bm{\Psi} \bm{A} \bm{A}^{*} \bm{\Psi}^{*}=\bm{\Psi} \bm{\Psi}^{*} = \rho.
$$
Note that $k \geq r = \operatorname{rank}(\rho)=\operatorname{rank}(\bm{\Psi})$. Condition (\ref{eq11}) states that the $k$ columns $\bm{a}_{i}$ of $\bm{A}$ are orthonormal vectors, that is, they verify the condition $\bm{a}_{i} \bm{a}_{j}^{*} = \delta_{i j}$. Considering that $\bm{a}_{i} \in \mathbb{C}^{p}$, we have the condition $p \geq k$, because we cannot find in $\mathbb{C}^{p}$ an orthonormal set $\left\{\bm{a}_{1}, \ldots, \bm{a}_{k}\right\}$ with more than $p$ components.

\bigskip

Now we relate an arbitrary DF  of $\rho$ to a minimum orthonormal DF.


\begin{prop}
\label{prop03}

An arbitrary $k$-DF  $\bm{\Phi}$ of $\rho$ is related to a reference minimum orthonormal DF  $\Psi_{0}$ in the form
\begin{subequations}
\begin{equation}
\bm{\Phi} = \bm{\Psi}_{0} \bm{A}_{0}
\label{eq13a}
\end{equation}
where $\bm{A}_{0}$ is an $r \times k$ matrix given by
\begin{equation}
\bm{A}_{0} = \bm{\Sigma}^{-2} \bm{\Psi}_{0}^{*} \bm{\Phi} 
\label{eq13b}
\end{equation}
\end{subequations}
$\bm{\Sigma}^{2}$ being the diagonal matrix formed by the positive eigenvalues of $\rho$. The matrix $\bm{A}_{0}$ verifies always the condition $\bm{A}_{0} \bm{A}_{0}^{*} = \bm{I}_{r}$.

\end{prop}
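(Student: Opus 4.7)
The plan is to verify the two assertions directly: first that the candidate $\bm{A}_0$ reproduces $\bm{\Phi}$ when multiplied by $\bm{\Psi}_0$, and second that $\bm{A}_0 \bm{A}_0^* = \bm{I}_r$. Throughout I will use the two identities that characterize the reference DF, namely $\bm{\Psi}_0 = \widehat{\bm{U}}\bm{\Sigma}$ with $\widehat{\bm{U}}^*\widehat{\bm{U}} = \bm{I}_r$, which gives $\bm{\Psi}_0^*\bm{\Psi}_0 = \bm{\Sigma}^2$ and $\bm{\Psi}_0\bm{\Psi}_0^* = \rho$. Since $\bm{\Sigma}^2$ is diagonal with strictly positive entries (the non-zero eigenvalues of $\rho$), the inverse $\bm{\Sigma}^{-2}$ is well defined, so $\bm{A}_0$ makes sense as an $r \times k$ matrix.

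For the factorization $\bm{\Phi} = \bm{\Psi}_0 \bm{A}_0$, I substitute (\ref{eq13b}) into $\bm{\Psi}_0 \bm{A}_0$ and obtain
$$
\bm{\Psi}_0 \bm{A}_0 = \widehat{\bm{U}}\bm{\Sigma}\cdot\bm{\Sigma}^{-2}\cdot\bm{\Sigma}\widehat{\bm{U}}^* \bm{\Phi} = \widehat{\bm{U}}\widehat{\bm{U}}^*\bm{\Phi}.
$$
The operator $\widehat{\bm{U}}\widehat{\bm{U}}^*$ is the orthogonal projector onto the range of $\widehat{\bm{U}}$, which coincides with $\operatorname{range}(\rho)$. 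What I must argue, and I expect this to be the main (though modest) obstacle, is that every column of $\bm{\Phi}$ already lies in $\operatorname{range}(\rho)$, so that the projector acts as the identity on $\bm{\Phi}$. This is the standard fact that $\operatorname{range}(\bm{\Phi}\bm{\Phi}^*) = \operatorname{range}(\bm{\Phi})$: if $\bm{\Phi}\bm{\Phi}^* x = 0$ then $\|\bm{\Phi}^* x\|^2 = x^*\bm{\Phi}\bm{\Phi}^* x = 0$, hence $\operatorname{null}(\bm{\Phi}\bm{\Phi}^*) = \operatorname{null}(\bm{\Phi}^*)$, and taking orthogonal complements gives $\operatorname{range}(\bm{\Phi}) = \operatorname{range}(\bm{\Phi}\bm{\Phi}^*) = \operatorname{range}(\rho)$. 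Thus $\widehat{\bm{U}}\widehat{\bm{U}}^*\bm{\Phi} = \bm{\Phi}$, and (\ref{eq13a}) is established.

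For the second claim, I compute directly, using $\bm{\Phi}\bm{\Phi}^* = \rho$, $\rho = \bm{\Psi}_0\bm{\Psi}_0^*$ and $\bm{\Psi}_0^*\bm{\Psi}_0 = \bm{\Sigma}^2$:
$$
\bm{A}_0 \bm{A}_0^* = \bm{\Sigma}^{-2}\bm{\Psi}_0^*\bm{\Phi}\bm{\Phi}^*\bm{\Psi}_0\bm{\Sigma}^{-2} = \bm{\Sigma}^{-2}\bm{\Psi}_0^*\bm{\Psi}_0\bm{\Psi}_0^*\bm{\Psi}_0\bm{\Sigma}^{-2} = \bm{\Sigma}^{-2}\bm{\Sigma}^{2}\bm{\Sigma}^{2}\bm{\Sigma}^{-2} = \bm{I}_r,
$$
which completes the proof. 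The whole argument is essentially a bookkeeping exercise around the identities defining $\bm{\Psi}_0$; the only conceptual input is the range equality $\operatorname{range}(\bm{\Phi}) = \operatorname{range}(\rho)$, which both guarantees the factorization and is implicitly what forces any DF to live in the correct $r$-dimensional subspace.
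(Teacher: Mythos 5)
Your proof is correct, and it is actually more complete than the one in the paper. The paper argues in the opposite direction: it \emph{assumes} the factorization $\bm{\Phi}=\bm{\Psi}_0\bm{A}_0$ of (\ref{eq13a}), left-multiplies by $\widehat{\bm{U}}^*$ to get $\bm{\Sigma}\bm{A}_0=\widehat{\bm{U}}^*\bm{\Phi}$, and thereby derives the formula (\ref{eq13b}) as a necessary consequence; it never verifies that this $\bm{A}_0$ actually reproduces $\bm{\Phi}$, i.e.\ the existence of the factorization is taken for granted. You instead verify sufficiency directly, and in doing so you isolate exactly the ingredient the paper leaves implicit: $\bm{\Psi}_0\bm{A}_0=\widehat{\bm{U}}\widehat{\bm{U}}^*\bm{\Phi}$ is a projection of $\bm{\Phi}$ onto $\operatorname{range}(\rho)$, and one needs the identity $\operatorname{range}(\bm{\Phi})=\operatorname{range}(\bm{\Phi}\bm{\Phi}^*)=\operatorname{range}(\rho)$ (proved via $\operatorname{null}(\bm{\Phi}\bm{\Phi}^*)=\operatorname{null}(\bm{\Phi}^*)$) to conclude that the projector acts as the identity on the columns of $\bm{\Phi}$. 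The paper's approach buys a quick derivation of the explicit form (\ref{eq13b}) and makes clear that $\bm{A}_0$ is uniquely determined once $\bm{\Psi}_0$ is fixed; your approach buys an actual proof that every $k$-DF admits such a representation, which is the substantive content of the proposition (and the claim emphasized in the Conclusions). The verification of $\bm{A}_0\bm{A}_0^*=\bm{I}_r$ is identical in both arguments. One small stylistic point: you could note explicitly that $\bm{\Sigma}^{-2}$ is Hermitian when you transpose $\bm{A}_0$, but since $\bm{\Sigma}$ is real diagonal this is immediate.
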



\begin{proof}

Considering that a minimum DF  has the form (see (\ref{eq07})) $\bm{\Psi}_{0}= \widehat{\bm{U}}  \bm{\Sigma}$, by left multiplying (\ref{eq13a}) by $ \widehat{\bm{U}}^{*}$, we get
$$
\widehat{\bm{U}}^{*} \bm{\Phi} = \widehat{\bm{U}}^{*} \bm{\Psi}_{0} \bm{A}_{0}= \widehat{\bm{U}}^{*} \widehat{\bm{U}} \bm{\Sigma} \bm{A}_{0} = \bm{\Sigma} \bm{A}_{0}
$$
and, considering that $\bm{\Psi}_{0}= \widehat{\bm{U}} \bm{\Sigma}$, (\ref{eq13b}) follows. Next, from (\ref{eq13b}), considering that $\bm{\Phi} \bm{\Phi}^{*} = \rho$ and also $\bm{\Psi}_{0} \bm{\Psi}_{0}^{*} = \rho$, we get
$$
\bm{A}_{0} \bm{A}_{0}^{*} = \bm{\Sigma}^{-2} \bm{\Psi}_{0}^{*} \bm{\Phi} \bm{\Phi}^{*} \bm{\Psi}_{0} \bm{\Sigma}^{-2} = \bm{\Sigma}^{-2} \bm{\Psi}_{0}^{*} \bm{\Psi}_{0} \bm{\Psi}_{0}^{*} \bm{\Psi}_{0} \bm{\Sigma}^{-2}
$$
where $\bm{\Psi}_{0}^{*} \bm{\Psi}_{0} = \bm{\Sigma}^{2}$. Hence, $\bm{A} \bm{A}^{*} = \bm{I}_{r}$.

\end{proof}



\begin{exmp}

Consider the minimum DF  given by (\ref{eq08})

\begin{equation}
\bm{\Psi}_{0}= \widehat{\bm{U}}  \bm{\Sigma} = \frac{1}{4}
\left[\begin{array}{cc}
 -\sqrt{3} & - 1  \tag{14}\\
i \sqrt{3} & - 1 \\
  \sqrt{3} & - 1 \\
- \sqrt{3} & - 1
\end{array}\right]\,.
\label{eq14}
\end{equation}
Using matrix
$$
\bm{A}_{0} = \frac{1}{\sqrt{6}} 
\left[\begin{array}{ccc}
 0        &  \sqrt{3} & -\sqrt{3} \\
-\sqrt{2} &  \sqrt{2} &  \sqrt{2}
\end{array}\right]
$$
which verifies the condition $\bm{A}_{0} \bm{A}_{0}^{*} = \bm{I}_{2}$, we obtain the 3-DF 
$$
\bm{\Phi} = \bm{\Psi}_{0} \bm{A} = -\frac{1}{8 \sqrt{3}} 
\left[\begin{array}{ccc}
  2 & 2 + 3   \sqrt{2} &  2 - 3   \sqrt{2} \\
  2 & 2 - 3 i \sqrt{2} &  2 + 3 i \sqrt{2} \\
  2 & 2 - 3   \sqrt{2} &  2 + 3   \sqrt{2} \\
  2 & 2 + 3   \sqrt{2} &  2 - 3   \sqrt{2}
\end{array}\right]\,.
$$

Another example of DF  modification is obtained with the matrix

$$
\bm{A} = \frac{1}{2\sqrt{2}}
\left[\begin{array}{cccccccc}
   1  &         1             &    1  &        1               &  1   &  1                    &  1  &       1           \\
   1  & e^{-\frac{i \pi}{4}}  &  - i  & e^{-\frac{i 3 \pi}{4}} & - 1  & e^{\frac{i 3 \pi}{4}} &  i  & e^{\frac{i \pi}{4}} \\
   1  & -i                    &  - 1  &        i               &  1   & -i                    & -1  &       i
\end{array}\right]
$$
which verifies the condition $\bm{A} \bm{A}^{*}=\bm{I}_3$.
It gives the 8-DF 

$$
\bm{\Psi}_{0} A = \frac{1}{2 \sqrt{2}}
\left[\begin{array}{cccccccc}
 1 &  1                   & 1  & 1                      &  1  & 1                     & 1  & 1                   \\
 1 & e^{-\frac{i \pi}{4}} & -i & e^{-\frac{i 3 \pi}{4}} &  -1 & e^{\frac{i 3 \pi}{4}} & i  & e^{\frac{i \pi}{4}} \\
 1 & -i                   & -1 & i                      &  1  & -i                    & -1 & i
\end{array}\right]\,.
$$

\end{exmp}

\section{\label{sec04}Conclusions}
We have investigated the multiplicity of density factors (DFs) of a given density operator. The main result is given by Proposition \ref{prop03}, which states that, starting from a minimum orthonormal DF  $\bm{\Psi}_{0}$, of dimension $n \times r$, one can generate all the possible DFs of a given density operator in the form $\bm{\Psi}=\bm{\Psi}_{0} \bm{A}_{0}$, where $\bm{A}_{0}$ is an arbitrary $k \times r$ matrix with orthonormal rows, that is, with $\bm{A}_{0} \bm{A}_{0}^{*}=\bm{I}_{r}$. Note that $k \geq r$ may be arbitrarily large.

In the paper we have considered density operators of {\bf finite order}, as considered by Hugston {\it et al.} in [\onlinecite{Hug93}]. But density operators may have an infinite order, like Gaussian density operators. Then the factorization is possible using the Block espansion [\onlinecite{Car15}].

\nocite{*}
\bibliography{multiplicity}
\end{document}